\titlespacing\section{0pt}{*0.8}{*0.8}
\titlespacing\subsection{0pt}{*0.7}{*0.7}
\titlespacing\subsubsection{0pt}{*0.6}{*0.6}
\newtheorem{lemma}{Lemma}
\begin{document}

\author{
\IEEEauthorblockN{Byunghyun Lee, Anindya Bijoy Das, David J. Love, Christopher G. Brinton, James V. Krogmeier} 

\IEEEauthorblockA{School of Electrical and Computer Engineering, Purdue University, West Lafayette, IN 47907 USA\\
\texttt{\{lee4093,das207,djlove,cgb,jvk\}@purdue.edu}
}

\thanks{
This work is supported in part by the National Science Foundation under grants EEC-1941529, CNS-2212565, and CNS-2225577 and by the Office of Naval Research under grant N000142112472.}
}

\title{Constant Modulus Waveform Design \\ with Block-Level Interference Exploitation for \\ DFRC Systems}

\maketitle

\begin{abstract}
Dual-functional radar-communication (DFRC) is a promising technology where radar and communication functions operate on the same spectrum and hardware.
In this paper, we propose an algorithm for designing constant modulus waveforms for DFRC systems.
Particularly, we jointly optimize the correlation properties and the spatial beam pattern.
For communication, we employ constructive interference-based block-level precoding (CI-BLP) to exploit distortion due to multi-user transmission and radar sensing.
We propose a majorization-minimization (MM)-based solution to the formulated problem.
To accelerate convergence, we propose an improved majorizing function that leverages a novel diagonal matrix structure.
We then evaluate the performance of the proposed algorithm through rigorous simulations.
Simulation results demonstrate the effectiveness of the proposed approach.

\end{abstract}

\begin{IEEEkeywords}
Integrated Sensing and Communication (ISAC), Dual-Functional Radar-Communication (DFRC), Interference Exploitation, Multiple-Input Multiple-Output (MIMO)
\end{IEEEkeywords}

\IEEEpeerreviewmaketitle

\section{Introduction}

To address the increasing spectrum scarcity, the concept of integrated sensing and communication (ISAC) has emerged, aiming to unify radio sensing and communication in a shared spectrum.
The sensing capabilities of communication systems offered by ISAC are expected to play a crucial role in location-based applications such as connected and autonomous vehicles, smart factories, and environmental monitoring.
Initially, ISAC involved information embedding into radar pulses and the coexistence of radar and communication (RadCom).
ISAC technologies then continued to evolve towards dual-functional radar-communication (DFRC), which integrates radar and communication into shared hardware and spectrum \cite{liu2020joint}.

In this context, there is growing research interest in designing waveforms for DFRC systems.
Since DFRC systems often involve high-power transmission for high-quality sensing, designing constant modulus waveforms is essential for the efficiency of high-power amplifiers (HPAs).
Some existing works have tackled the problem of designing constant modulus waveforms for DFRC systems \cite{liu2018toward,liuDualFunctionalRadarCommunicationWaveform2021,liu2022joint}.
As an alternative approach, the works in \cite{bazziIntegratedSensingCommunication2023,liu2022joint} considered a peak-to-average power ratio (PAPR) constraint to circumvent the problem of nonlinearity in HPAs.

From a communication perspective, the design of constant modulus waveforms has a strong connection to symbol-level precoding (SLP), which directly designs transmit symbols rather than a linear precoder \cite{li2020tutorial,alodeh2015constructive}.
Specifically, SLP utilizes explicit data symbol information to exploit interference that contributes to communication signal power, called constructive interference (CI).
Despite extensive research on CI-based SLP (CI-SLP), the use of CI-SLP for DFRC systems has been relatively unexplored.
In \cite{liuDualFunctionalRadarCommunicationWaveform2021}, a beam pattern design problem was tackled under per-user CI and constant modulus constraints.
This work focused on symbol-by-symbol optimization, which requires solving an optimization problem at every symbol time.
To mitigate the computational burden, the work in \cite{li2022practical} studied block-level interference exploitation, also referred to as CI-based block-level precoding (CI-BLP).
In \cite{liu2022joint}, the use of CI-BLP for DFRC systems was initially investigated.
This work followed a cognitive radar framework that utilizes known information about targets and clutter to maximize the radar signal-to-interference-plus-noise ratio (SINR).


Unlike the CI-SLP approach, the CI-BLP approach focuses on optimizing a space-time matrix.
Therefore, from a radar perspective, it becomes crucial to address the correlation properties of the waveform as well as its spatial properties to reduce ambiguity in space and time.
In past ISAC work, a prevalent approach to this challenge has been introducing a similarity constraint \cite{liu2018toward,bazziIntegratedSensingCommunication2023,qian2018joint,keskinLimitedFeedforwardWaveform2021,liu2022joint}.
The similarity constraint ensures that the designed waveform retains space-time correlation properties of the reference waveform, such as linear frequency modulated (LFM) waveforms.
Nonetheless, such an indirect approach lacks direct control over space-time sidelobes.
Direct approaches aim to optimize explicit sidelobe cost functions such as integrated sidelobe level (ISL) and peak sidelobe level (PSL) \cite{liuJointTransmitBeamforming2020a,liuRangeSidelobeReduction2020,wen2023transmit}.
These works have addressed the reduction of space and time sidelobes individually.
However, space-time correlation properties should be considered together to separate targets at different angles and distances effectively \cite{duly2013time,san2007mimo}.

In this paper, we address the problem of designing constant modulus waveforms for DFRC systems.
We consider block-level optimization for designing DFRC waveforms instead of symbol-by-symbol optimization.
We jointly optimize the correlation properties and spatial beam pattern of the waveform.
For communication, we employ CI-BLP to take advantage of CI on a block level.
We then formulate a constant waveform design problem that optimizes the beam pattern and sidelobes subject to a per-user quality of service (QoS) constraint.
To tackle the nonconvexity of the waveform design problem, we propose a solution algorithm based on the majorization-minimization (MM) principle and the method of Lagrange multipliers.
To improve convergence speed, we propose an improved majorizing function that leverages a novel diagonal matrix structure.
We evaluate the proposed algorithm via comprehensive simulations and demonstrate the effectiveness of our approach.

\section{System Model and Problem Formulation}
\subsection{System Setup}
Consider a downlink narrowband DFRC system where the base station (BS) serves as a multi-user multiple-input multiple-output (MU-MIMO) transmitter and colocated MIMO radar simultaneously, as depicted in Fig. \ref{fig:system}.
The BS is equipped with transmit and receive arrays of $N_T$ and $N_R$ antennas, respectively.
The primary function of the considered system is radar sensing, while the secondary function is communication.
To accomplish the dual functions of radar and communication, this paper focuses on downlink transmission where the BS transmits a discrete-time waveform matrix $\textbf{X}\in \mathbb{C}^{N_T \times L}$ in each transmission block.
The $(n,\ell)$th entry ${X}_{n,\ell}$ represents the $\ell$th discrete-time transmit symbol and the $\ell$th radar subpulse of the $n$th transmit antenna.


\begin{figure}[!t]
\center{\includegraphics[width=.7\linewidth]{./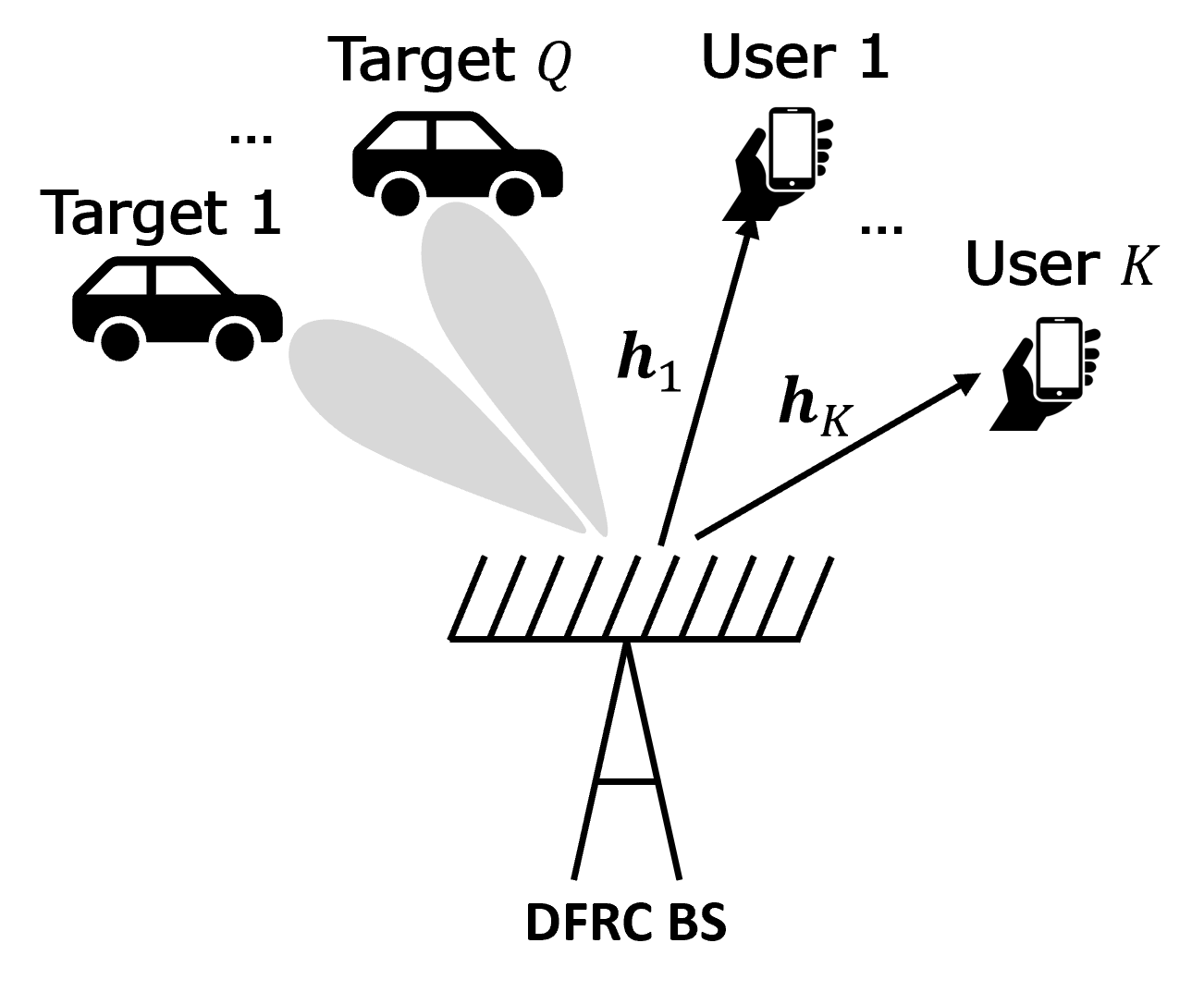}}
\caption{Example of a DFRC system.}
\label{fig:system}
\end{figure}

\subsection{Radar Performance Metrics}

\subsubsection{Beam Pattern Shaping Cost}

In radar waveform synthesis, it is crucial to have strong mainlobes aimed toward targets while maintaining low sidelobes.
This is to ensure strong return signals from the targets and reduce unwanted signals induced by clutter.
Given the waveform $\textbf{X}$, the beam pattern at angle $\theta$ is given by \cite{li2008mimo}
\begin{equation}    \tilde{G}(\textbf{X},\theta)=\textbf{a}^H(\theta)\textbf{X}\textbf{X}^H\textbf{a}(\theta) = \Vert \textbf{a}^H(\theta)\textbf{X} \Vert ^2,
\end{equation}
where 
$\textbf{a}(\theta)\in\mathbb{C}^{N_T}$ is the steering vector of the transmit array.
The beam pattern can be expressed in vector form as \vspace{-1mm}
\begin{align*}
    G(\bm{x},\theta) &= \Vert(\textbf{I}_L\otimes\textbf{a}^H(\theta))\bm{x}\Vert^2\\
    &=\bm{x}^H\underbrace{\left(\textbf{I}_L\otimes\textbf{a}^H(\theta))^H(\textbf{I}_L\otimes\textbf{a}^H(\theta)\right)}_{\textbf{A}_u} \bm{x} = \bm{x}^H\textbf{A}_u\bm{x},
\end{align*}
where $\bm{x}=\text{vec}(\textbf{X})$.
To obtain desired properties, we consider minimizing the means square error (MSE) between the ideal beam pattern and the actual beam pattern, which can be expressed as \vspace{-2mm}
\begin{equation}
    \tilde{g}_{bp}(\alpha,\bm{x})=\displaystyle\sum_{u=1}^U| \alpha G_d(\theta_u)-G(\bm{x},\theta_u)|^2,
\end{equation}
where $U$ is the number of discretized angles, $\alpha$ is a scaling coefficient, and $G_d(\theta_u)$ is the desired beam pattern at angle $\theta_u$.
For convenience, we use a finite number of angles to approximate the beam pattern MSE.
Additionally, the scaling coefficient $\alpha$ adjusts the amplitude of the beam pattern that varies according to the BS transmit power.
Given that the closed-form solution to $\alpha$ is available, beam pattern shaping cost can be written in compact form as \cite{liuDualFunctionalRadarCommunicationWaveform2021}
(see Appendix \ref{sec:appendix_A} for details) 
\begin{equation}
    g_{bp}(\bm{x})=\displaystyle\sum_{u=1}^U|\bm{x}^H\textbf{B}_u\bm{x}|^2,
\end{equation}
where \vspace{-1.5mm}
\begin{align*}
    \textbf{B}_u \triangleq {G_d(\theta_u)\displaystyle\sum_{u'=1}^U\textbf{A}_{u'}G_d(\theta_{u'})}\Big({\displaystyle\sum_{u'=1}^UG_d^2(\theta_{u'})}\Big)^{-1}-\textbf{A}_u.
\end{align*}

\subsubsection{Space-time Autocorrelation and Cross-Correlation ISL}
For designing a radar waveform, its inherent ambiguity should be addressed as it directly impacts the quality of parameter estimation.
We exploit the space-time correlation function to quantify such ambiguity in the radar waveform.
The space-time correlation function characterizes the correlation between a radar waveform and its echo reflected from different points in space and time, which is given by \cite{wang2012design}
\begin{equation}
\chi_{\tau,{q},{q'}}=|\textbf{a}^H(\theta_q)\textbf{X}\textbf{J}_{\tau}\textbf{X}^H\textbf{a}(\theta_{q'})|^2,
\end{equation}
where $\textbf{J}_{\tau}\in\mathbb{R}^{L \times L}$ is the shift matrix.
The shift matrix accounts for the time shifts of the waveform due to the round-trip delay between the BS and a target, which is given by \cite{horn2012matrix}
\begin{equation}
    [\textbf{J}_{\tau}]_{i,j} = \begin{cases}
        1, & \text{if }j-i=\tau \\
        0, & \text{otherwise}.
    \end{cases}
\end{equation}
The space-time correlation function can be rewritten in vector form as (See Appendix \ref{sec:appendix_B} for details)
\begin{align*}
\chi_{\tau,q,q'} = |\bm{x}^H \textbf{D}_{\tau,q,q'}\bm{x}|^2,
\end{align*}
where $\textbf{D}_{\tau,q,q'}=\textbf{J}_{-\tau}\otimes \textbf{a}(\theta_{q'})\textbf{a}^H(\theta_{q})$.
For a given parameter set $(\tau,q,q')$, the space-time correlation function $\chi_{\tau,{q},{q'}}$ describes the correlation between angles $\theta_q$ and $\theta_{q'}$ at a range bin $\tau$.
When $q=q'$, the space-time correlation function represents the autocorrelation properties at angle $\theta_q$.
The autocorrelation integrated sidelobe level (ISL) can be obtained as \vspace{-1mm}
\begin{align}\label{eq:obj_ac}
    g_{ac}(\bm{x})&=\displaystyle\sum_{q=1}^Q\displaystyle\sum_{\substack{\tau=-P+1, \\ \tau\neq 0}}^{P-1}\chi_{\tau,q,q}, \vspace{-1.5mm}
\end{align}
where $P$ is the largest range bin of interest with $P-1\leq L$.
When $q\neq q'$, the space-time correlation function $\chi_{\tau,q,q'}$ represents the cross-correlation properties between angles $\theta_q$ and $\theta_{q'}$ at a range bin $\tau$.
The cross-correlation ISL is given by \vspace{-1mm}
\begin{align}\label{eq:obj_cc}
    g_{cc}(\bm{x})=
    \displaystyle\sum_{q=1}^{Q}\displaystyle\sum_{\substack{q'=1, \\ q'\neq q}}^Q\displaystyle\sum_{\tau=-P+1}^{P-1}\chi_{\tau,q,q'}.
\end{align}

\subsection{Communication Model and QoS Constraint}

Consider MU-MIMO transmission where the BS communicates with $K$ single antenna users simultaneously, i.e., $N_T$ is assumed to be greater than or equal to $K$.
We adopt a block-fading channel model where the communication channels remain the same within a transmission block.
In addition, we assume the BS has perfect knowledge of the user channels $\textbf{h}_k\in \mathbb{C}^{N_T}$ for $k=1,2,\dots,K$.
The $\ell$th received symbol at user $k$ can be written as \vspace{-1.5mm}
\begin{equation}
    {y}_{k,\ell}=\textbf{h}_k^H\bm{x}_{\ell}+{n}_{k,\ell},
\end{equation}
where $\bm{x}_{\ell}$ is the $\ell$th column of $\textbf{X}$ and ${n}_{k,\ell}\in \mathbb{C}$ is Gaussian noise with ${n}_{k,\ell} \sim \mathcal{CN}(0,\sigma^2)$.
The codeword for user $k$ is given by $\textbf{s}_k=[s_{k,1},s_{k,2},\dots,s_{k,L}]\in \mathbb{C}^{L}$ where each desired symbol $s_{k,\ell}$ is drawn from a predefined constellation $\mathcal{S}$.
We explain the relationship between $\bm{x}_{\ell}$ and $s_{k,\ell}$ in the following.

\begin{figure}[!t]
\center{\includegraphics[width=.700\linewidth]{./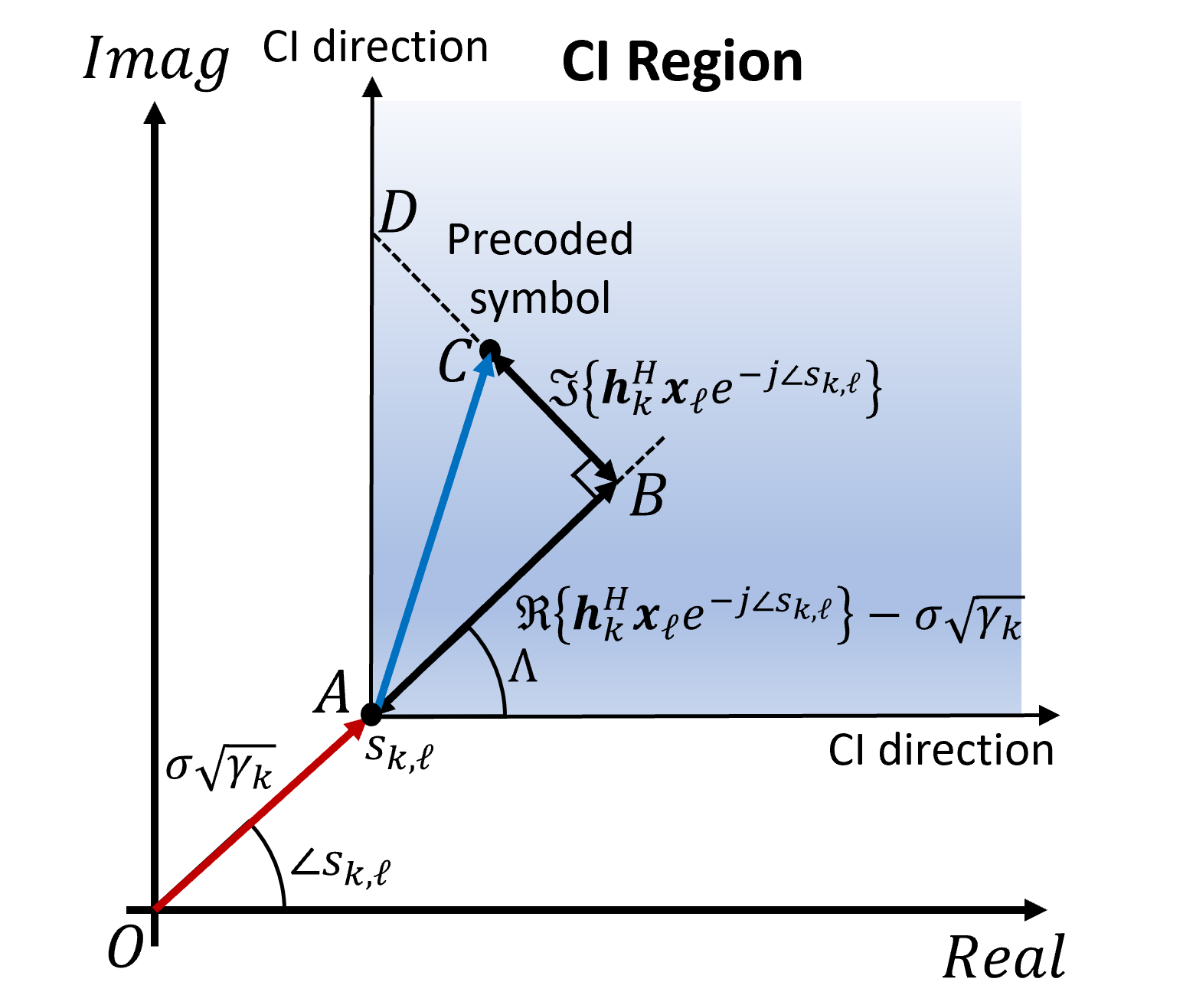}}
\caption{ Constructive interference (CI) region.}
\label{fig:CIR}
\end{figure}

\subsubsection{Per-User Communication QoS Constraint}


The BS must provide a minimum QoS for the communication users to accomplish the communication task.
We consider a CI-BLP approach to exploit the distortion induced by MU-MIMO and radar transmission.
CI refers to an unintended signal that moves the desired symbol $s_{k,\ell}$ farther away from its corresponding decision boundaries in the constructive direction, as illustrated in Fig. \ref{fig:CIR}.
Instead of suppressing signal distortion, the CI-BLP approach leverages CI to reduce symbol error rates.

In this paper, we focus on the M-phase shift keying (M-PSK) constellation.
Without loss of generality, we consider the QPSK as an example, i.e., $M=2$.
Fig. \ref{fig:CIR} describes the CI region for the QPSK scenario, where the vector $\overrightarrow{OC}=\textbf{h}_k^H\bm{x}_{\ell}$ represents the $\ell$th noiseless precoded symbol for user $k$.
A precoded symbol lies within the CI region if the condition $|\overrightarrow{BD}|-|\overrightarrow{BC}|\geq 0$ holds.
The lengths $|\overrightarrow{BD}|$, $|\overrightarrow{BC}|$ of the direction vectors can be expressed, respectively, as \vspace{-1mm}
\begin{align}
    |\overrightarrow{BD}|&=\Re\{\textbf{h}^H_k\bm{x}_{\ell}e^{-j\angle s_{k,\ell}}-\sigma \sqrt{\gamma_k}\}\tan \Lambda \\
    |\overrightarrow{BC}|&=|\Im\{\textbf{h}^H_k\bm{x}_{\ell}e^{-j\angle s_{k,\ell}}\}|.    \vspace{-1.5mm}
\end{align}
$|\overrightarrow{OA}|=\sigma \sqrt{\gamma_k}$ represents the QoS requirement for the users, which can be viewed as a signal-to-noise ratio (SNR) requirement.
From the above discussion, the communication constraint for the $\ell$th symbol of user $k$ can be formulated as 
\begin{align*}\label{ineq:comm_constraint}
    \Re&\{\textbf{h}^H_k\bm{x}_{\ell}e^{-j\angle s_{k,\ell}}-\sigma \sqrt{\gamma_k}\}\tan \Lambda -|\Im\{\textbf{h}^H_k\bm{x}_{\ell}e^{-j\angle s_{k,\ell}}\}| \geq 0,
\end{align*}
where $\Lambda = \pi/M$.
The above CI constraint can be transformed into compact form as \cite{liu2022joint} \vspace{-1mm}
\begin{equation}
    \Re\{\tilde{\textbf{h}}^H_m\bm{x}\} \geq \gamma_m, \forall m=1,2,\dots,2KL
\end{equation}
where \vspace{-1.5mm} 
\begin{align*}    
    \tilde{\textbf{h}}^H_{(2\ell-2)K+k}& \triangleq  \textbf{e}^T_{\ell,L} \otimes \textbf{h}_k^He^{-j\angle s_{k,\ell}}(\sin \Lambda -j\cos \Lambda)\\
    \tilde{\textbf{h}}^H_{(2\ell-1)K+k}& \triangleq  \textbf{e}^T_{\ell,L} \otimes \textbf{h}_k^He^{-j\angle s_{k,\ell}}(\sin \Lambda +j\cos \Lambda)\\
    \Gamma_{(2\ell-2)K+k}  &\triangleq \sigma \sqrt{\gamma_k}\sin \Lambda,\Gamma_{(2\ell-1)K+k} \triangleq \sigma \sqrt{\gamma_k}\sin \Lambda,
\end{align*}
and $\textbf{e}_{\ell,L}\in \mathbb{R}^{L}$ denotes the $\ell$th column of $\textbf{I}_L$.

\subsection{Problem Formulation}

Based on the formulated performance metrics, our objective is to design a dual-functional waveform for detecting targets at specific angles while also serving the communication users.
To achieve this, we aim to jointly minimize the beam pattern shaping cost, autocorrelation ISL, and cross-correlation ISL.
From a communication perspective, we ensure that the communication symbols fall within the CI region to meet the QoS requirement.
By taking these design goals into account, the waveform design problem is formulated as
\begin{equation}\label{eq:prob_formulation1}
\begin{aligned}
& \underset{\bm{x}}{\min}
& &  \omega_{bp}g_{bp}(\bm{x})+\omega_{ac} g_{ac}(\bm{x})+\omega_{cc} g_{cc}(\bm{x})  \\ 
& \text{s.t.}
& & \textbf{C1}: \Re\{\tilde{\textbf{h}}^H_m\bm{x}\} \geq \Gamma_m, \forall m=1,2,\dots,2KL \\
& & & \textbf{C2}: | {x}_{n} | = \sqrt{\frac{P_T}{N_T}}, \ \forall n=1,2,\dots,LN_T \\
\end{aligned}
\end{equation}
where $\omega_{bp}$, $\omega_{ac}$, $\omega_{cc}$ are the weights for the beam pattern shaping cost, autocorrelation ISL, and cross-correlation ISL, respectively.
Note $\textbf{C1}$ is the communication constraint, $\textbf{C2}$ is the constant modulus constraint.

\section{Proposed MM-based Solution}

In this section, we introduce our solution using the MM technique.
To address the intractable fourth-order objective, we derive its linear majorizer.
For faster convergence, we introduce an improved majorizer based on a novel diagonal matrix structure.
By using the proposed majorizer, the original problem can be approximated to linear programming (LP) with a constant modulus constraint.
It is shown that this class of problems can be efficiently tackled using the method of Lagrange multipliers \cite{he2022qcqp}. 
We demonstrate the majorization process of \eqref{eq:prob_formulation1} and the solution based on dual problems. 



\subsection{Majorization via an Improved Majorizer}

 
First, we rewrite the quadratic term in the beam pattern shaping cost as $\bm{x}^H\textbf{B}_u\bm{x}=\text{Tr}(\bm{x}\bm{x}^H\textbf{B}_u)=\text{vec}^H(\bm{x}\bm{x}^H)\text{vec}(\textbf{B}_u)$.
Then, adopting the common approach in \cite{liFastAlgorithmsDesigning2018,liuDualFunctionalRadarCommunicationWaveform2021,liFastAlgorithmsDesigning2018,sun2016majorization}, the fourth-order beam pattern shaping cost can be expressed as
\begin{equation}\label{eq:obj_quartic}
\begin{aligned}
    g_{bp}(\bm{x})&=\displaystyle\sum_{p=1}^{U}|\bm{x}^H\textbf{B}_u\bm{x}|^2  \\ \nonumber
    &= \text{vec}^H(\bm{x}\bm{x}^H)\underbrace{\left(\displaystyle\sum_{u=1}^{U}\text{vec}(\textbf{B}_u)\text{vec}^H(\textbf{B}_u)\right)}_{{\bm{\Psi}}_1}\text{vec}(\bm{x}\bm{x}^H) \\ \nonumber
    &= \text{vec}^H(\bm{x}\bm{x}^H){\bm{\Psi}}_1\text{vec}(\bm{x}\bm{x}^H).
\end{aligned}
\end{equation}
It can be verified that ${\bm{\Psi}}_1$ is a $(L^2N_T^2 \times L^2N_T^2)$ Hermitian positive definite matrix.
Following this approach, the objective can be expressed as
\begin{equation}\label{eq:obj_major1}
    \begin{aligned}
g(\bm{x})&=\text{vec}^H(\bm{x}\bm{x}^H)\underbrace{\left(\omega_{bp}\bm{\Psi}_1+\omega_{ac}\bm{\Psi}_2+\omega_{cc}\bm{\Psi}_3\right)}_{\bm{\Psi}}\text{vec}(\bm{x}\bm{x}^H) \\ 
    &=\text{vec}^H(\bm{x}\bm{x}^H)\bm{\Psi}\text{vec}(\bm{x}\bm{x}^H)
\end{aligned}
\end{equation}
where \vspace{-1.5mm}
\begin{align*}
    \bm{\Psi}_2 &\triangleq \displaystyle\sum_{q=1}^Q\displaystyle\sum_{\substack{\tau=-P+1, \\ \tau\neq 0}}^{P-1}\text{vec}(\textbf{D}_{\tau,q,q})\text{vec}^H(\textbf{D}_{\tau,q,q}) \\
    \bm{\Psi}_3 &\triangleq\displaystyle\sum_{q=1}^{Q}\displaystyle\sum_{\substack{q'=1, \\ q'\neq q}}^Q\displaystyle\sum_{\tau=-P+1}^{P-1}\text{vec}(\textbf{D}_{\tau,q,q'})\text{vec}^H(\textbf{D}_{\tau,q,q'}).
\end{align*}

Then, we use the following lemma to construct a majorizer of the fourth-order objective function.
\begin{lemma}\label{lemma:Taylor} (\cite[(13)]{sun2016majorization})
Let $\textbf{Q},\textbf{R}$ be Hermitian matrices with $\textbf{R}\succeq \textbf{Q}$. 
Then, a quadratic function $\bm{u}^H\textbf{Q}\bm{u}$ can be majorized at a point $\bm{u}_t$ as
\begin{align*}
    \bm{u}^H\textbf{Q}\bm{u} \leq \bm{u}^H\textbf{R}\bm{u}&+2\Re\{\bm{u}^H(\textbf{Q}-\textbf{R})\bm{u}_t\} +\bm{u}_t^H(\textbf{R}-\textbf{Q})\bm{u}_t.
\end{align*}
\end{lemma}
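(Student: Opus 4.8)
The plan is to prove the majorization inequality directly by expanding the difference $\bm{u}^H\textbf{R}\bm{u} - \bm{u}^H\textbf{Q}\bm{u}$ and showing it is bounded below by the linear-in-$\bm{u}$ correction terms. Concretely, I would consider the quantity
\begin{equation*}
    \Delta \triangleq \bm{u}^H\textbf{R}\bm{u}+2\Re\{\bm{u}^H(\textbf{Q}-\textbf{R})\bm{u}_t\} +\bm{u}_t^H(\textbf{R}-\textbf{Q})\bm{u}_t - \bm{u}^H\textbf{Q}\bm{u},
\end{equation*}
and the goal is to show $\Delta \ge 0$. First I would collect the second-order terms $\bm{u}^H(\textbf{R}-\textbf{Q})\bm{u}$ together with the cross terms and the constant term, which suggests completing the square.

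The key algebraic step is recognizing that $\Delta$ factors as a Hermitian quadratic form in the difference $\bm{u}-\bm{u}_t$. Writing $\textbf{M}\triangleq \textbf{R}-\textbf{Q}$, I would verify by direct expansion that
\begin{equation*}
    \Delta = (\bm{u}-\bm{u}_t)^H\textbf{M}(\bm{u}-\bm{u}_t).
\end{equation*}
This follows because expanding the right-hand side gives $\bm{u}^H\textbf{M}\bm{u} - 2\Re\{\bm{u}^H\textbf{M}\bm{u}_t\} + \bm{u}_t^H\textbf{M}\bm{u}_t$, and substituting $\textbf{M}=\textbf{R}-\textbf{Q}$ reproduces exactly the four grouped terms in $\Delta$ after noting that $\bm{u}^H\textbf{M}\bm{u} = \bm{u}^H\textbf{R}\bm{u}-\bm{u}^H\textbf{Q}\bm{u}$ and that the cross term matches $2\Re\{\bm{u}^H(\textbf{Q}-\textbf{R})\bm{u}_t\}$ up to the sign carried by $-\textbf{M}$. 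Here I would use that $\textbf{Q}$ and $\textbf{R}$ are Hermitian so that the quadratic forms are real and $\Re\{\cdot\}$ behaves as expected.

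Once the factorization is established, the conclusion is immediate: by hypothesis $\textbf{R}\succeq \textbf{Q}$, i.e.\ $\textbf{M}=\textbf{R}-\textbf{Q}\succeq \textbf{0}$, so the quadratic form $(\bm{u}-\bm{u}_t)^H\textbf{M}(\bm{u}-\bm{u}_t)\ge 0$ for every $\bm{u}$. Hence $\Delta\ge 0$, which is precisely the claimed inequality, with equality at $\bm{u}=\bm{u}_t$. This also confirms that the majorizer touches the original function at the expansion point, as required for a valid surrogate in the MM framework.

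I do not expect a serious obstacle here, since the result is essentially a second-order Taylor identity for quadratics plus positive semidefiniteness; the only point requiring mild care is bookkeeping the real parts and the signs when regrouping the cross terms, ensuring that the Hermitian property of $\textbf{Q}$ and $\textbf{R}$ is invoked so that $\bm{u}^H\textbf{M}\bm{u}_t$ and its conjugate combine correctly into $2\Re\{\cdot\}$. The main conceptual content is simply identifying the completed-square form, after which the semidefinite ordering closes the argument.
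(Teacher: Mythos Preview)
Your proof is correct: completing the square to write the gap as $(\bm{u}-\bm{u}_t)^H(\textbf{R}-\textbf{Q})(\bm{u}-\bm{u}_t)\ge 0$ is exactly the standard argument, and the touching condition at $\bm{u}=\bm{u}_t$ follows immediately. Note that the paper does not supply its own proof of this lemma; it simply quotes the result from \cite[(13)]{sun2016majorization}, so there is nothing further to compare against.
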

By choosing a diagonal $\textbf{R}$ such that $\textbf{R}\succeq \textbf{Q}$, one can majorize a quadratic function.
In the literature, the most widely used choice for $\textbf{R}$ is $\textbf{R}=\lambda_Q \textbf{I}$ where $\lambda_Q$ is the largest eigenvalue of $\textbf{Q}$ \cite{liuDualFunctionalRadarCommunicationWaveform2021,zhao2016unified,sun2016majorization}.
However, this majorizer can be loose when $\textbf{Q}$ is ill-conditioned, which may incur slow convergence.
To overcome this problem, we propose a novel majorizer based on the following lemma.
\begin{lemma}\label{lemma:diag}
Let $\textbf{Q}$ be a Hermitian matrix.
Let $\bar{\textbf{Q}}$ be a matrix such that $\bar{{Q}}_{i,j}=|{Q}_{i,j}|$.
Then, $\text{diag}(\bar{\textbf{Q}}\textbf{1})\succeq \textbf{Q}$.
\end{lemma}
\begin{proof}
For any $\bm{u}$, we have
\begin{equation}
    \begin{aligned}
    &\bm{u}^H(\text{diag}(\bar{\textbf{Q}}\textbf{1})-\textbf{Q})\bm{u}
    \\ \nonumber
    &= \displaystyle\sum_i\underbrace{u^*_iu_i}_{|u_i|^2}\displaystyle\sum_{j}|{Q}_{i,j}|-\displaystyle\sum_i\displaystyle\sum_{j}u_i^*{Q}_{i,j}u_j
    \\ \nonumber
    &= 
    \displaystyle\sum_{i,j}\left(|{Q}_{i,j}||u_i|^2 
    - {Q}_{i,j}u_i^*u_j\right)
    \\ \nonumber
     & =
    \frac{1}{2}\displaystyle\sum_{i,j}\left(|{Q}_{i,j}||u_i|^2 +|{Q}_{i,j}||u_j|^2 
    - 2\Re\{{Q}_{i,j}u_i^*u_j\}\right).\vspace{-1.5mm}
\end{aligned}
\end{equation}
For any $i,j$, we have
\begin{align*}
    &|{Q}_{i,j}||u_i|^2 +|{Q}_{i,j}||u_j|^2 
    - 2\Re\{{Q}_{i,j}u_i^*u_j\} \geq \\ \nonumber
    &|{Q}_{i,j}||u_i|^2 +|{Q}_{i,j}||u_j|^2 
    - 2|{Q}_{i,j}||u_i||u_j|  \\ \nonumber
    &=|{Q}_{i,j}|(|u_i|- |u_j|)^2 \geq 0,
\end{align*}
where the first inequality follows from the fact $|{Q}_{i,j}||u_i||u_j|\geq \Re\{{Q}_{i,j}u_i^*u_j\}$.
It follows that $\bm{u}^H(\text{diag}(\bar{\textbf{Q}}\textbf{1})-\textbf{Q})\bm{u} \geq 0 \ \forall \bm{u}$.
\end{proof}

Using Lemma 3, a tight majorizer for the objective can be constructed as follows.
\begin{lemma}
    Let $\bar{\bm{\Psi}}$ be a matrix such that $\bar{{\Psi}}_{i,j}=|{{\Psi}}_{i,j}|$ for all $i,j$.
    The objective \eqref{eq:obj_major1} can be majorized as
    \begin{equation}
        g_{bp}(\bm{x})\leq \bm{x}^H\bm{\Phi}\bm{x} + const
    \end{equation}
    where \vspace{-1.5mm}
 \begin{align*}    
  {\bm{\Phi}}&\triangleq 2\left(\omega_{bp}\bm{\Phi}_1 + \omega_{ac}\bm{\Phi}_2 +\omega_{cc}\bm{\Phi}_3
-\left({\textbf{E}}\odot\bm{x}_t\bm{x}_t^H\right)\right) \\
\bm{\Phi}_1 & \triangleq \displaystyle\sum_{u=1}^U\bm{x}_t^H{\textbf{B}}_u^H\bm{x}_t{\textbf{B}}_u, 
\bm{\Phi}_2  \triangleq \displaystyle\sum_{q=1}^Q\displaystyle\sum_{\substack{\tau=-P+1, \\ \tau\neq 0}}^{P-1}\bm{x}_t^H\textbf{D}^H_{\tau,q,q}\bm{x}_t\textbf{D}_{\tau,q,q}
\\
\bm{\Phi}_3 & \triangleq \displaystyle\sum_{q=1}^{Q}\displaystyle\sum_{\substack{q'=1, \\ q'\neq q}}^Q\displaystyle\sum_{\tau=-P+1}^{P-1}\bm{x}_t^H\textbf{D}^H_{\tau,q,q'}\bm{x}_t\textbf{D}_{\tau,q,q'},\
{\textbf{E}}\triangleq\text{mat}(\bar{\bm{\Psi}}\textbf{1})
\end{align*}
\end{lemma}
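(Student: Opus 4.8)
The plan is to treat the quartic objective $g(\bm{x})$ in \eqref{eq:obj_major1} as a \emph{quadratic} form $\bm{u}^H\bm{\Psi}\bm{u}$ in the lifted variable $\bm{u}=\text{vec}(\bm{x}\bm{x}^H)$, and then majorize that quadratic form by chaining the two lemmas already in hand. Concretely, I would set $\textbf{Q}=\bm{\Psi}$ and choose the diagonal dominating matrix $\textbf{R}=\text{diag}(\bar{\bm{\Psi}}\textbf{1})$; Lemma~\ref{lemma:diag} guarantees $\textbf{R}\succeq\bm{\Psi}$, so $\textbf{R}$ is admissible. Substituting $\textbf{Q}=\bm{\Psi}$, $\textbf{R}=\text{diag}(\bar{\bm{\Psi}}\textbf{1})$, $\bm{u}=\text{vec}(\bm{x}\bm{x}^H)$ and $\bm{u}_t=\text{vec}(\bm{x}_t\bm{x}_t^H)$ into Lemma~\ref{lemma:Taylor} yields
\[
g(\bm{x})\leq \bm{u}^H\textbf{R}\bm{u}+2\Re\{\bm{u}^H\bm{\Psi}\bm{u}_t\}-2\Re\{\bm{u}^H\textbf{R}\bm{u}_t\}+\bm{u}_t^H(\textbf{R}-\bm{\Psi})\bm{u}_t,
\]
and the remaining work is to simplify these four terms into the claimed form.

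Next I would dispatch the terms one by one. The piece $\bm{u}_t^H(\textbf{R}-\bm{\Psi})\bm{u}_t$ is independent of $\bm{x}$ and absorbed into $const$. The piece $\bm{u}^H\textbf{R}\bm{u}$ is where the constant-modulus constraint \textbf{C2} enters: since $|x_n|^2=P_T/N_T$ is fixed, every entry of $\bm{u}$ satisfies $|u_i|^2=(P_T/N_T)^2$, so $\bm{u}^H\textbf{R}\bm{u}=(P_T/N_T)^2\,\text{Tr}(\textbf{R})$ is constant and joins $const$; this is exactly why the diagonal choice of $\textbf{R}$ collapses the leading quadratic and leaves an LP-type subproblem. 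For the cross term I would expand $\bm{\Psi}=\omega_{bp}\bm{\Psi}_1+\omega_{ac}\bm{\Psi}_2+\omega_{cc}\bm{\Psi}_3$ and apply the identities $\text{vec}^H(\bm{x}\bm{x}^H)\text{vec}(\textbf{M})=\bm{x}^H\textbf{M}\bm{x}$ and $\text{vec}^H(\textbf{M})\text{vec}(\bm{x}_t\bm{x}_t^H)=\bm{x}_t^H\textbf{M}^H\bm{x}_t$ to each rank-one factor $\text{vec}(\textbf{M})\text{vec}^H(\textbf{M})$; this turns $\bm{u}^H\bm{\Psi}_1\bm{u}_t$ into $\sum_u(\bm{x}^H\textbf{B}_u\bm{x})(\bm{x}_t^H\textbf{B}_u^H\bm{x}_t)=\bm{x}^H\bm{\Phi}_1\bm{x}$, and likewise produces $\bm{x}^H\bm{\Phi}_2\bm{x}$ and $\bm{x}^H\bm{\Phi}_3\bm{x}$, matching the stated $\bm{\Phi}_1,\bm{\Phi}_2,\bm{\Phi}_3$.

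For the last nonconstant term, $-2\Re\{\bm{u}^H\textbf{R}\bm{u}_t\}$, I would write out the diagonal action index-by-index: identifying the vec-index $i$ with the matrix pair $(m,n)$ so that $u_i=x_m\bar{x}_n$, $u_{t,i}=x_{t,m}\bar{x}_{t,n}$ and $R_{ii}=(\bar{\bm{\Psi}}\textbf{1})_i=E_{m,n}$ with $\textbf{E}=\text{mat}(\bar{\bm{\Psi}}\textbf{1})$, one obtains
\[
\bm{u}^H\textbf{R}\bm{u}_t=\sum_{m,n}E_{m,n}(\bm{x}_t\bm{x}_t^H)_{m,n}\,\bar{x}_m x_n=\bm{x}^H(\textbf{E}\odot\bm{x}_t\bm{x}_t^H)\bm{x}.
\]
Collecting the nonconstant pieces then reproduces $\bm{x}^H\bm{\Phi}\bm{x}$ with the stated $\bm{\Phi}=2(\omega_{bp}\bm{\Phi}_1+\omega_{ac}\bm{\Phi}_2+\omega_{cc}\bm{\Phi}_3-\textbf{E}\odot\bm{x}_t\bm{x}_t^H)$, modulo justifying that the $\Re\{\cdot\}$ operators may be dropped.

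The hard part will be precisely this realness bookkeeping, i.e.\ showing that the quadratic forms are real so the $\Re\{\cdot\}$ disappears and the factors of $2$ are exact. For $\bm{\Phi}_1$ this is immediate since each $\textbf{B}_u$ is Hermitian; for $\bm{\Phi}_2,\bm{\Phi}_3$ the matrices $\textbf{D}_{\tau,q,q'}$ are \emph{not} individually Hermitian, so I would exploit the structural identity $\textbf{D}_{\tau,q,q'}^H=\textbf{D}_{-\tau,q',q}$ (which follows from $\textbf{J}_{-\tau}^H=\textbf{J}_{\tau}$) together with the symmetry of the $\tau$-summation about $0$: pairing the $(\tau,q,q')$ and $(-\tau,q',q)$ contributions exhibits each pair as $\textbf{M}+\textbf{M}^H$, so $\bm{\Phi}_2$ and $\bm{\Phi}_3$ are Hermitian and their quadratic forms are real. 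The analogous subtlety for the Hadamard term is that $\textbf{E}\odot\bm{x}_t\bm{x}_t^H$ must be Hermitian, which holds provided $\textbf{E}$ is symmetric. I would establish $E_{m,n}=E_{n,m}$ by noting that $\bar{\bm{\Psi}}$ is invariant under the commutation matrix $\textbf{K}$ swapping $(m,n)\leftrightarrow(n,m)$: since each $\bm{\Psi}_i$ is a sum of terms $\text{vec}(\textbf{M})\text{vec}^H(\textbf{M})$ with $\textbf{M}$ obeying $\textbf{K}\,\text{vec}(\textbf{M})=\overline{\text{vec}(\textbf{M})}$, one gets $\textbf{K}\bm{\Psi}_i\textbf{K}=\overline{\bm{\Psi}_i}$ and hence $\textbf{K}\bar{\bm{\Psi}}\textbf{K}=\bar{\bm{\Psi}}$ entrywise, from which the equal row sums $E_{m,n}=E_{n,m}$ follow by relabeling the summation index. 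Once these symmetry facts are in place, the four terms combine to the claimed bound.
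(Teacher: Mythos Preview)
Your proposal is essentially identical to the paper's proof in Appendix~C: apply Lemma~\ref{lemma:Taylor} with $\textbf{R}=\text{diag}(\bar{\bm{\Psi}}\textbf{1})$ (admissible by Lemma~\ref{lemma:diag}), reduce $\bm{u}^H\textbf{R}\bm{u}$ to a constant via the constant-modulus constraint, rewrite $\bm{u}^H\bm{\Psi}_i\bm{u}_t$ as $\bm{x}^H\bm{\Phi}_i\bm{x}$ using the same vec/trace identities, and rewrite $\bm{u}^H\textbf{R}\bm{u}_t$ as $\bm{x}^H(\textbf{E}\odot\bm{x}_t\bm{x}_t^H)\bm{x}$. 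Your realness bookkeeping actually goes beyond what the paper supplies (the paper silently drops the $\Re\{\cdot\}$); one small wrinkle is that for $\bm{\Psi}_2,\bm{\Psi}_3$ the individual $\textbf{D}_{\tau,q,q'}$ are \emph{not} Hermitian, so $\textbf{K}\,\text{vec}(\textbf{D}_{\tau,q,q'})\neq\overline{\text{vec}(\textbf{D}_{\tau,q,q'})}$ termwise---you need the same $(\tau,q,q')\leftrightarrow(-\tau,q',q)$ pairing you already used for $\bm{\Phi}_2,\bm{\Phi}_3$ to obtain $\textbf{K}\bm{\Psi}_i\textbf{K}=\overline{\bm{\Psi}_i}$ at the level of the sum, after which the symmetry of $\textbf{E}$ follows.
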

\textit{Proof:} See Appendix \ref{sec:appendix_C}.

This majorizer is still quadratic, which is difficult to solve under the constant modulus constraint. 
Thus, we further majorize the obtained quadratic function to lower its order as follows.
\begin{lemma}
    Let $\bar{\bm{\Phi}}$ be a matrix such that $\bar{{\Phi}}_{i,j}=|{{\Phi}}_{i,j}|$ for any $i,j$.
    The quadratic function in \eqref{ineq:quadratic_majorizer3} is majorized by
    \begin{equation}
       \bm{x}^H\bm{\Phi}\bm{x} \leq \underbrace{\Re\{\bm{x}^H\textbf{d}\}}_{\bar{g}(\bm{x})}+const,
    \end{equation}
where $\textbf{d}\triangleq2({\bm{\Phi}}-\text{diag}(\bar{\bm{\Phi}}\textbf{1}))\bm{x}_t$.   
\end{lemma}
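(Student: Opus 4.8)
The plan is to linearize the quadratic form $\bm{x}^H\bm{\Phi}\bm{x}$ by composing the diagonal dominance bound of Lemma~\ref{lemma:diag} with the Taylor-type majorizer of Lemma~\ref{lemma:Taylor}, and then invoking the constant modulus constraint \textbf{C2} to absorb the surviving quadratic piece into a constant. Concretely, I would set $\textbf{Q}=\bm{\Phi}$ and choose the diagonal matrix $\textbf{R}=\text{diag}(\bar{\bm{\Phi}}\textbf{1})$. Lemma~\ref{lemma:diag} gives $\textbf{R}\succeq\bm{\Phi}$ directly, so the hypothesis $\textbf{R}\succeq\textbf{Q}$ of Lemma~\ref{lemma:Taylor} is satisfied.

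Applying Lemma~\ref{lemma:Taylor} at the current iterate $\bm{x}_t$ then yields
\[
\bm{x}^H\bm{\Phi}\bm{x}\leq \bm{x}^H\textbf{R}\bm{x}+2\Re\{\bm{x}^H(\bm{\Phi}-\textbf{R})\bm{x}_t\}+\bm{x}_t^H(\textbf{R}-\bm{\Phi})\bm{x}_t.
\]
The key observation, and the only step requiring care, is that the leading quadratic term is constant over the feasible set: because $\textbf{R}$ is diagonal, $\bm{x}^H\textbf{R}\bm{x}=\sum_i R_{i,i}|x_i|^2$, and \textbf{C2} fixes $|x_i|^2=P_T/N_T$ for every index, so this sum equals $(P_T/N_T)\sum_i R_{i,i}$, independent of $\bm{x}$. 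This is exactly why the diagonal majorizer of Lemma~\ref{lemma:diag}, rather than an arbitrary $\textbf{R}\succeq\bm{\Phi}$, is the right choice here: diagonality is what collapses the quadratic into a constant under constant modulus.

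With $\bm{x}^H\textbf{R}\bm{x}$ constant and the trailing term $\bm{x}_t^H(\textbf{R}-\bm{\Phi})\bm{x}_t$ constant as well (since $\bm{x}_t$ is the fixed expansion point), both fold into $const$. Recognizing $\textbf{d}=2(\bm{\Phi}-\textbf{R})\bm{x}_t=2(\bm{\Phi}-\text{diag}(\bar{\bm{\Phi}}\textbf{1}))\bm{x}_t$ identifies the one remaining term as $2\Re\{\bm{x}^H(\bm{\Phi}-\textbf{R})\bm{x}_t\}=\Re\{\bm{x}^H\textbf{d}\}$, establishing $\bm{x}^H\bm{\Phi}\bm{x}\leq\Re\{\bm{x}^H\textbf{d}\}+const$. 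I anticipate no genuine obstacle beyond correctly applying \textbf{C2}; all the heavy lifting is already done by the two preceding lemmas, and the result reduces the quadratic subproblem to the linear program over constant modulus vectors promised at the start of the section.
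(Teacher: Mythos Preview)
Your proposal is correct and mirrors the paper's own proof essentially line for line: the paper also sets $\textbf{R}=\text{diag}(\bar{\bm{\Phi}}\textbf{1})$, invokes Lemma~\ref{lemma:diag} to certify $\textbf{R}\succeq\bm{\Phi}$, applies Lemma~\ref{lemma:Taylor} at $\bm{x}_t$, and then uses the constant modulus constraint to collapse $\bm{x}^H\text{diag}(\bar{\bm{\Phi}}\textbf{1})\bm{x}=\tfrac{P_T}{N_T}\textbf{1}^T\bar{\bm{\Phi}}\textbf{1}$ into a constant, leaving exactly $\Re\{\bm{x}^H\textbf{d}\}+const$. Your added remark that diagonality of $\textbf{R}$ is precisely what makes this collapse work under \textbf{C2} is a useful clarification that the paper leaves implicit.
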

\begin{proof}
By applying Lemma 2 and Lemma 3 again, we have
    \begin{align*}
    \bm{x}^H{\bm{\Phi}}\bm{x} 
     &\leq \underbrace{\bm{x}^H\text{diag}(\bar{\bm{\Phi}}\textbf{1})\bm{x}}_{\frac{P_T}{N_T}\textbf{1}^T\bar{\bm{\Phi}}\textbf{1}}+
     \Re\{\bm{x}^H\underbrace{2({\bm{\Phi}}-\text{diag}(\bar{\bm{\Phi}}\textbf{1}))\bm{x}_t}_{\textbf{d}
    }\}  \\ \nonumber
    &\quad+\bm{x}_t^H(\text{diag}(\bar{\bm{\Phi}}\textbf{1})-{\bm{\Phi}})\bm{x}_t
    = \Re\{\bm{x}^H\textbf{d}\}+const.
\end{align*}
The proof is complete.
\end{proof}
Using Lemmas 3 and 4, the objective function can be majorized as
    \begin{equation}\label{eq:majorizer}
       \omega_{bp}g_{bp}(\bm{x})+\omega_{ac} g_{ac}(\bm{x})+\omega_{cc} g_{cc}(\bm{x}) \leq \bar{g}(\bm{x})+const,
    \end{equation}    
    where $\bar{g}(\bm{x}) = \Re\{\bm{x}^H\textbf{d}\}$.



\subsection{Solution via the Method of Lagrange Multipliers}

\begin{algorithm}
\DontPrintSemicolon
\SetNlSty{textbf}{\{}{\}}
\SetAlgoNlRelativeSize{0}
\SetNlSty{}{}{}

\caption{Bisection Method\label{alg:K-Bisection}}

\textbf{Input:} Lagrange multiplier vector {$\bm{\nu}$, stopping thresholds $\epsilon_1$, $\epsilon_2$}

\textbf{Initialization:} $i=0$; $\bm{\nu}[0] = \bm{\nu}$, $\hat{g}[0]=\infty$;
With slight abuse of notation, $\bar{g}_m(\nu')$ denotes $\bar{g}_m(\bm{x}(\bm{\nu}))|_{\nu_{m}=\nu'}$\;

\Repeat{$|\hat{g}[i]-\hat{g}[i-1]|/|\hat{g}[i-1]|<\epsilon_1$}{
    \For{$m=1:2KL$}{
        \lIf{$\bar{g}_m(0)\leq 0$}{$\nu_m^u=0$}
        \Else{
            $\nu^l_m=0$, $\nu^u_m=1$;\;
            \lIf{$\bar{g}_m(\nu^u_m)\leq 0$}{$\nu_m^u=1$}
            \Else{
                \lRepeat{$\bar{g}_m(\nu^u_m)\leq0$}{$\nu^u_m=2\nu_m^u$}
                $\nu^l_m=\nu^u_m/2$\;
            }
            \Repeat{$|\bar{g}_m(\nu_m)+\epsilon_2/2|<\epsilon_2/2$}{
                $\nu_m=(\nu^l_m+\nu^u_m)/2$;\;
                \lIf{$\bar{g}_m(\nu_m)>0$}{$\nu^l_m=\nu_m$}
                \lElse{$\nu^u_m=\nu_m$}
            }
        }
    }
    $i \gets i+1$, $\bm{\nu}[i]=[\nu_1,\dots,\nu_{2KL}]$, $\hat{g}[i]=\bar{g}(\bm{\nu}[i])+\displaystyle\sum_{m=1}^{2KL}\nu_m\bar{g}_m(\bm{\nu}[i])$\;
}

{\textbf{Output:} recover a solution $\bm{x}$ from $\bm{\nu}[i]$ and \eqref{eq:sol_to_x}}

\end{algorithm}

\begin{algorithm}
\DontPrintSemicolon
\SetNlSty{textbf}{\{}{\}}
\SetAlgoNlRelativeSize{0}
\SetNlSty{}{}{}

\caption{Proposed MM-based Algorithm\label{alg:MM}}

{\textbf{Input:} starting point $\bm{x}_0$, stopping threshold $\epsilon_3$}

\textbf{Initialize:} set $t=0$, $g[0]=\infty$\;

\Repeat{$|g[t]-g[t-1]|/|g[t-1]|\leq \epsilon_3$}{
    $t \gets t+1$, Update $\bm{x}_t$ using Algorithm \ref{alg:K-Bisection}\;
    $g[t]\gets\omega_{bp}g_{bp}(\bm{x}_t)+\omega_{ac} g_{ac}(\bm{x}_t)+\omega_{cc} g_{cc}(\bm{x}_t)$\;
    
}

{\textbf{Output:} convert $\bm{x}_t$ into a matrix $\textbf{X}$}

\end{algorithm}

Using the majorization \eqref{eq:majorizer}, the problem \eqref{eq:prob_formulation1} can be approximated as
\begin{equation}\label{eq:iMM_formulation}
\begin{aligned}
& \underset{\bm{x}}{\min}
& &  \bar{g}(\bm{x})  \\ 
& \text{s.t.}
& &  \bar{g}_m(\bm{x})\leq 0, \ \forall m=1,2,\dots,2KL\\
& & & | {x}_{n} | = \sqrt{\frac{P_T}{N_T}}, \ \forall n=1,2,\dots,LN_T \\
\end{aligned}
\end{equation}
where $\bar{g}_m(\bm{x})=\Re\{-\bm{x}^H\tilde{\textbf{h}}_m\}+\Gamma_m$.
The Lagrange dual problem for \eqref{eq:iMM_formulation} is given by
\begin{equation}\label{eq:iMM_dual}
\begin{aligned}
& \underset{\bm{\nu}}{\sup}\ \underset{\bm{x}}{\min}
& &  \bar{g}(\bm{x}) + \displaystyle\sum_{m=1}^{2KL} \nu_m \bar{g}_m(\bm{x})\\ 
& \text{s.t.}
& &  | {x}_{n} | = \sqrt{\frac{P_T}{N_T}}, \ \forall n=1,2,\dots,LN_T \\
& & & \nu_m \geq 0 , \ \forall m=1,2,\dots,2KL 
\end{aligned}
\end{equation}
where $\bm{\nu}=[\nu_1,\nu_2,\dots,\nu_{2KL}]$ is the Lagrange multiplier vector with ${\nu}_m\in\mathbb{R}^+$ being the Lagrange multiplier for the $m$th communication constraint.
For a given $\bm{\nu}$, the objective of \eqref{eq:iMM_dual} is linear.
Thus, the optimal dual solution can be obtained as \vspace{-1mm}
\begin{equation}\label{eq:sol_to_x}
    \bm{x}^*(\bm{\nu})=\sqrt{\frac{P_T}{N_T}}\exp{\left(j \angle\left(\displaystyle\sum_{m=1}^{2KL}\nu_m\tilde{\textbf{h}}_m-\textbf{d}\right)\right)}.
\end{equation}

It is shown that the strong duality between the primal and dual problems holds \cite{he2022qcqp} if there exists a solution $\bm{\nu}$ that satisfies the following conditions:
\begin{align}
    &\bm{x}(\bm{\nu})=\sqrt{\frac{P_T}{N_T}}\exp{\left(j \angle\left(\displaystyle\sum_{m=1}^{2KL}\nu_m\tilde{\textbf{h}}_m-\textbf{d}\right)\right)}, \label{eq:x_nu}\\
    &0\leq \nu_m \leq \infty,\ \bar{g}_m(\bm{x}(\bm{\nu}))\leq 0, \forall m =1,2,\dots,2KL \label{ineq:dual_feasible}\\
    &\nu_m\bar{g}_m(\bm{x}(\bm{\nu}))=0, \forall m =1,2,\dots,2KL \label{eq:KKT}.
\end{align}

A solution satisfying \eqref{eq:x_nu} and \eqref{eq:KKT} always exists, given $\nu_m<\infty$ for $m=1,2,\dots,2KL$.
Assuming that the feasible set is strictly feasible, we have $\lim_{\nu_m\rightarrow\infty}\bar{g}_m(\bm{x}(\bm{\nu}))=\bar{g}_m\left(\exp{\left(j \angle\tilde{\textbf{h}}_m\right)}\right)<0$ for any $m=1,2,\dots,2KL$.
Hence, there exists $\bm{\nu}$ that satisfies equation \eqref{eq:KKT} and has finite entries, leading to strong duality.
Thus, we focus on solving the dual problem instead of solving the primal problem directly.
Given the closed-form solution \eqref{eq:sol_to_x} to the inner problem, the dual problem \eqref{eq:iMM_dual} can be reduced to finding optimal $\bm{\nu}$ that satisfy the conditions \eqref{eq:x_nu} and \eqref{eq:KKT}.
With this in mind, the dual problem can be reformulated as
\begin{equation}\label{eq:iMM_dual2}
\begin{aligned}
& \underset{\bm{\nu}}{\sup}\ 
& &\bar{g}(\bm{x}(\bm{\nu})) + \displaystyle\sum_{m=1}^{2KL} \nu_m \bar{g}_m(\bm{x}(\bm{\nu}))\\ 
& \text{s.t.} & & \nu_m \geq0 ,\ \bar{g}_m(\bm{x}(\bm{\nu}))\leq 0, \ \forall m=1,2,\dots,2KL \\
& & &\nu_m\bar{g}_m(\bm{x}(\bm{\nu}))=0, \forall m =1,2,\dots,2KL
\end{aligned}
\end{equation}

The problem \eqref{eq:iMM_dual2} can be solved via a coordinate ascent method where one Lagrange multiplier is optimized at a time with the other Lagrange multipliers fixed.
Specifically, we modify the bisection algorithm in \cite{he2022qcqp}, as described in Algorithm \ref{alg:K-Bisection}.
Once the Lagrange multipliers are obtained, the solution to the primal problem can be directly recovered using \eqref{eq:sol_to_x}.
This process is repeated until the objective value converges, as described in Algorithm \ref{alg:MM}.

\color{black}

\section{Simulation Results}

\begin{figure}[t]
         \centering
        \center{\includegraphics[width=.85\linewidth]
        {./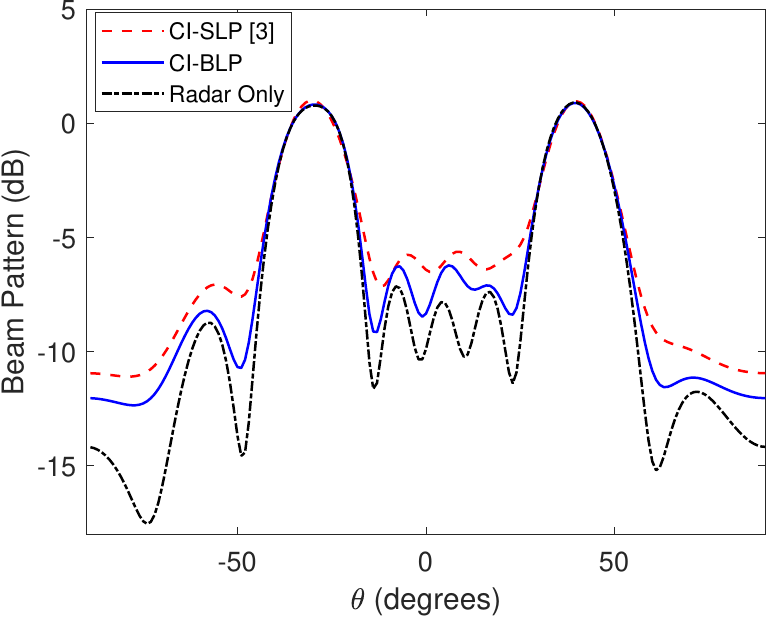}}
        \caption{Synthesized beam patterns. The proposed approach (CI-BLP) achieved lower sidelobes than the CI-SLP approach \cite{liuDualFunctionalRadarCommunicationWaveform2021} owing to block-level optimization.}
        \label{fig:Beam_Pattern1}
        \vspace{-2mm}
\end{figure}

\begin{figure*}
     \centering
     \begin{subfigure}[b]{0.3\textwidth}
            {\includegraphics[width=.95\linewidth]
            {./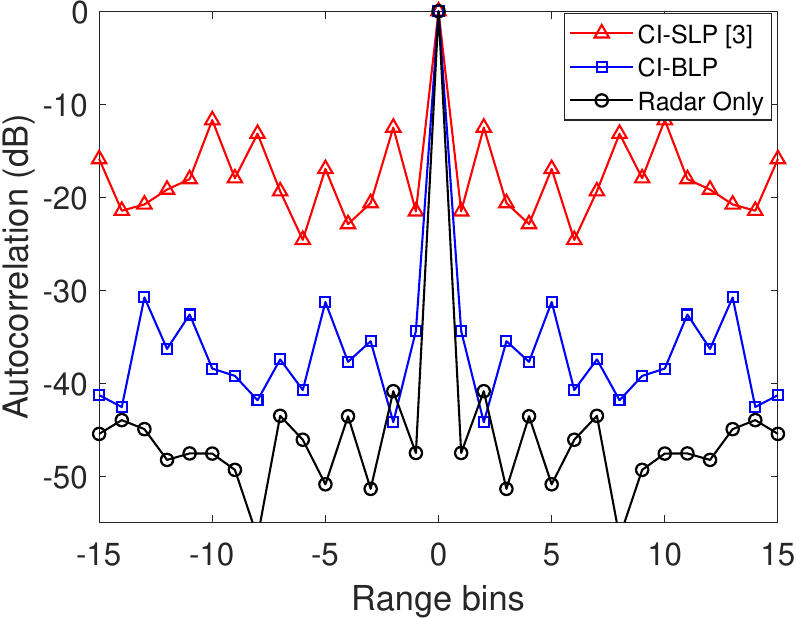}}
            \caption{ Autocorrelation for $\theta_1=-30^{\circ}$.}
            \label{fig:AC1}
     \end{subfigure}
     \begin{subfigure}[b]{0.3\textwidth}
             {\includegraphics[width=.95\linewidth]
             {./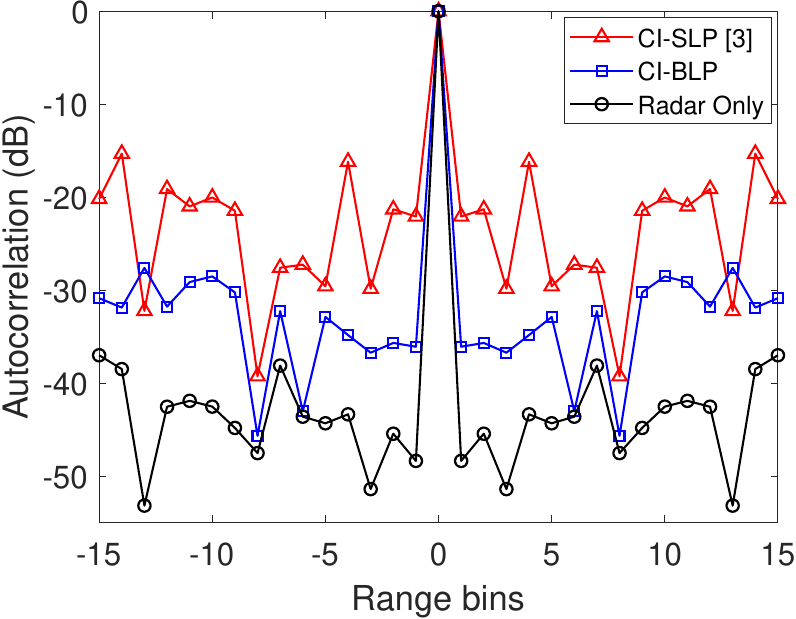}}
             \caption{ Autocorrelation for $\theta_2=40^{\circ}$.}
             \label{fig:AC2}
     \end{subfigure}     
     \label{fig:AC}
     \begin{subfigure}[b]{0.3\textwidth}
            {\includegraphics[width=.95\linewidth]
            {./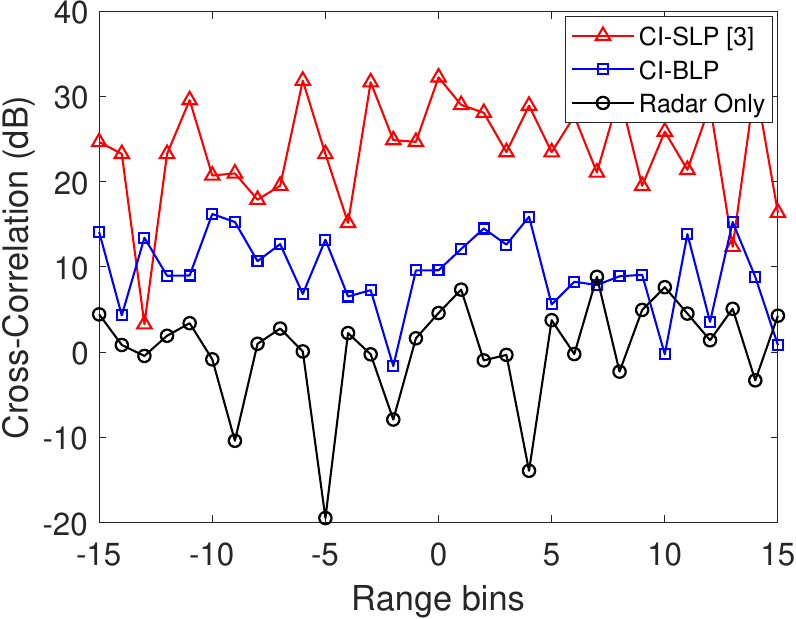}}
            \caption{Cross-correlation between $\theta_1$ and $\theta_2$.}
            \label{fig:CC1}
     \end{subfigure}
     \caption{Space-time autocorrelation and cross-correlation. The proposed CI-BLP approach achieved lower autocorrelation and cross-correlation sidelobes than CI-SLP due to ISL minimization.}
     \vspace{-5mm}
\end{figure*}


In this section, we evaluate the proposed algorithm through simulations.
We set $K=3$, $N_T=10$, $L=64$, $P=16$, $\gamma_k=6\ dB$, $P_T=1$, $\sigma^2=0.01$, $\epsilon_1=\epsilon_2 = 10^{-4}$ and $\epsilon_3=3\times 10^{-5}$ unless otherwise specified.
For the transmit array, we use a uniform linear array (ULA) with half-wavelength spacing.
We consider the uncorrelated Rayleigh channel for the communication channel of each user, i.e., $\textbf{h}_k\sim \mathcal{CN}(\textbf{0}_{N_T \times 1},\textbf{I}_{N_T})$ for $k=1,2,\dots,K$.
We set the discretized angle range to be $[0^{\circ},180^{\circ}]$ with the angle resolution of $1^{\circ}$, i.e., $\theta_u=u^{\circ}$ for $u=1,2,\dots,180$. 
For the reference beam pattern, we consider a rectangular beam pattern, which is given by \cite{li2008mimo}
\begin{equation}
    G_d(\theta)=
    \begin{cases}
    1,& \text{if } \theta_q - \Delta_{\theta}/2\leq\theta\leq  \theta_q + \Delta_{\theta}/2 \ \forall q\\
    0,              & \text{otherwise}
\end{cases}
\end{equation}
where $\Delta_\theta$ is the beam width.
We considered two targets, i.e, $Q=2$ each at angle $\theta_1=-30^{\circ}$ and $\theta_3=40^{\circ}$.
The beam width $\Delta_{\theta}$ is set to $20^{\circ}$.
We set the weights for the cost functions as $(\omega_{bp},\omega_{ac},\omega_{cc})=(1,2,2)$.
We use a radar-only scheme that solves \eqref{eq:prob_formulation1} without the communication constraints as a baseline.
Further, we compare the proposed algorithm to the algorithm in \cite{liuDualFunctionalRadarCommunicationWaveform2021}, which optimizes the beam pattern shaping cost on a symbol-by-symbol basis rather than block-by-block, under a per-user CI constraint.

\textbf{Beam pattern synthesis:}
Fig. \ref{fig:Beam_Pattern1} compares the synthesized beam patterns of the proposed algorithm (CI-BLP), the CI-SLP approach \cite{liuDualFunctionalRadarCommunicationWaveform2021}, and the radar-only scheme.
It is evident that the proposed CI-BLP approach achieves lower spatial sidelobes than the CI-SLP approach.
This gain comes from block-level optimization.
The CI-SLP approach optimizes the objective for one symbol at a time, which can be seen as a myopic approach. 
As opposed to the CI-SLP method, the CI-BLP approach optimizes the beam pattern for the entire block, which enables lower sidelobes.
In addition, it can be observed that the radar-only scheme has the lowest spatial sidelobes.
This is due to the trade-off between information transmission and radar sensing.
Since no communication constraint was imposed, the radar-only scheme provides the performance bound of DFRC schemes.

\begin{figure}
        \center{\includegraphics[width=.8\linewidth]{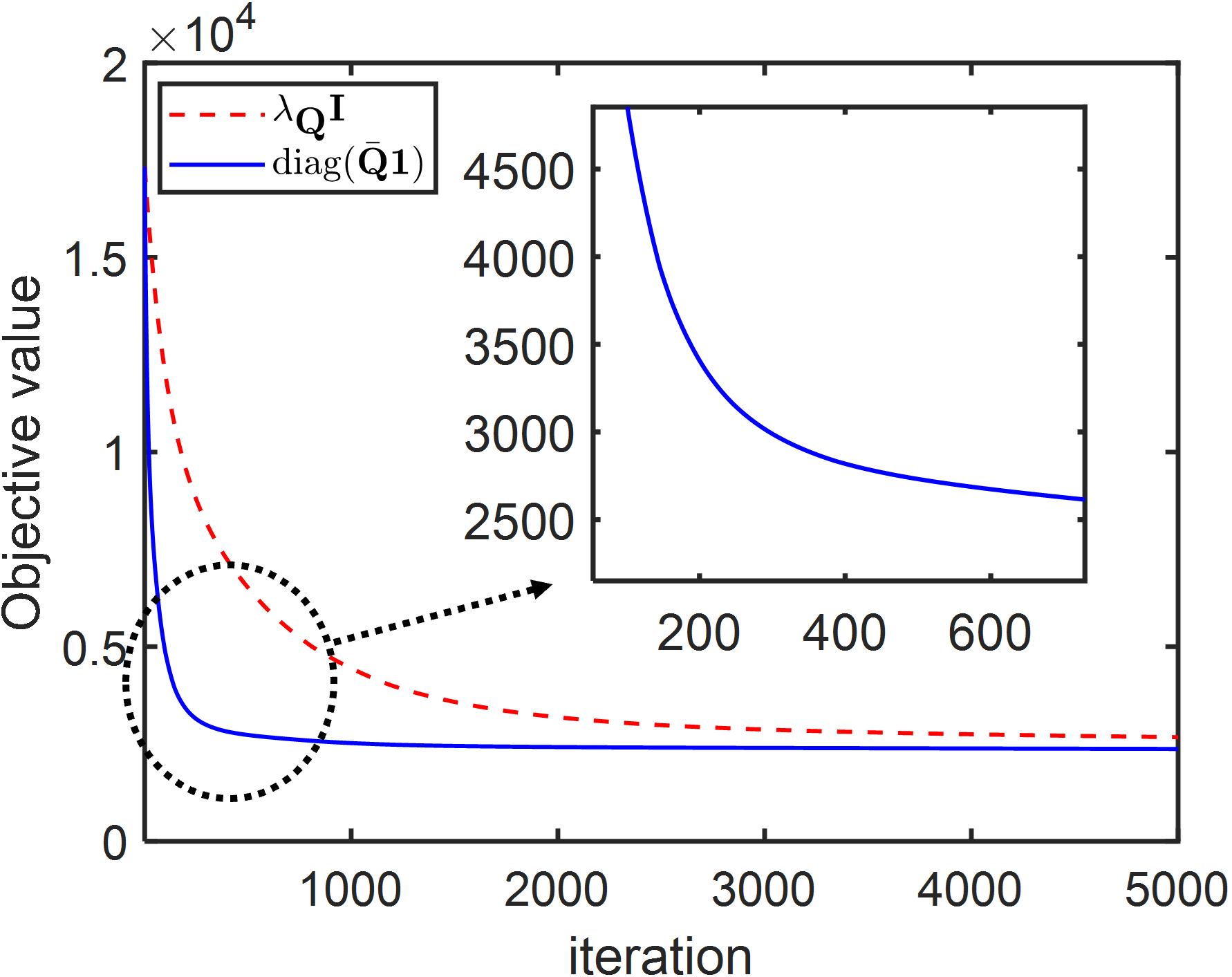}}
        \caption{ Convergence of the proposed majorizer and the maximum eigenvalue-based majorizer.}
        \label{fig:convergence}
        \vspace{-4mm}
\end{figure}

\textbf{Autocorrelation properties:} Next, we evaluate the correlation properties of waveforms designed by the proposed algorithm in comparison with CI-SLP and the radar-only scheme.
Fig. \ref{fig:AC1} and Fig. \ref{fig:AC2} plot the autocorrelations of targets $\theta_1$ and $\theta_2$, respectively.
It can be verified that the CI-BLP scheme outperforms the CI-SLP scheme in terms of autocorrelation with an approximate improvement of $\SI{10}{\decibel}$. 
The CI-SLP approach designs the waveform on a symbol-by-symbol basis, which does not address the temporal correlation between the symbols in a transmission block.
In contrast, the CI-BLP approach suppressed temporal sidelobes effectively due to block-level ISL minimization.
Consequently, the CI-BLP scheme approaches the radar-only scheme quite closely in terms of autocorrelation, leading to improvements in delay estimation accuracy.


\textbf{Cross-correlation properties:} 
Fig. \ref{fig:CC1} compares the space-time cross-correlation between two target angles.
Overall, the proposed algorithm is approximately $10 dB$ lower in cross-correlation than the CI-SLP approach.
For a similar reason to the autocorrelation figures, our proposed approach suppresses the cross-correlation between the targets, while the CI-SLP approach does not address correlation properties.
Owing to the lower cross-correlation, our proposed CI-BLP approach can better distinguish targets at different distances than the CI-SLP approach.

\textbf{Convergence of the proposed majorizer:} In Fig. \ref{fig:convergence}, we compare the convergence speed of the proposed algorithm with the proposed majorizer using Lemma 3 and the traditional largest eigenvalue-based majorizer.
For this simulation, we configure $L=8$ and $(\omega_{bp},\omega_{ac},\omega_{cc})=(1,0,0)$, which optimizes the beam pattern only.
Clearly, the slope of the proposed majorizer is steeper than that of the largest eigenvalue-based majorizer.
The objective value using the proposed majorizer rapidly decreases until around $600$ iterations whereas that using the largest eigenvalue-based majorizer descends much more slowly until more than $2500$ iterations.
This demonstrates the superiority of the proposed majorizer.

\section{Conclusion}

This paper investigated the problem of designing constant modulus waveforms for DFRC systems.
We optimized the space-time correlations of the waveform as well as its spatial beam pattern for high-resolution parameter estimation.
For the communication function, we studied the use of CI-BLP, where communication symbols are designed via block-level optimization rather than symbol-by-symbol optimization.
We solved the formulated problem using the MM technique and the Lagrange method of multipliers.
For faster convergence, we proposed a novel majorizer that outperforms traditional majorizers.
Simulation results showed the effectiveness of our proposed majorizer and proposed block-level approach.

\color{black}

\appendices

\section{Beam Pattern Shaping Cost Derivation}\label{sec:appendix_A}
The beam pattern shaping cost is given by
\begin{equation}  \label{eq:f_x1_vector}
\tilde{g}_{bp}(\alpha,\bm{x})=\displaystyle\sum_{u=1}^U|\alpha G_d(\theta_u)-\bm{x}^H\textbf{A}_u\bm{x}|^2.
\end{equation}
Clearly, the above pattern beam shaping cost is quadratic in $\alpha$.
The partial derivative of the beam pattern shaping cost with respect to $\alpha$ is given by
\begin{equation}    \nabla_{\alpha}\tilde{g}_{bp}(\alpha,\bm{x})=\displaystyle\sum_{u=1}^U\left(2\alpha G_d^2(\theta_u)-2\bm{x}^H\textbf{A}_u\bm{x}G_d(\theta_u)\right).
\end{equation}
The second-order partial derivative is given by
\begin{equation}    
    \nabla_{\alpha}(\nabla_{\alpha}\tilde{g}_{bp}(\alpha,\bm{x}))=\displaystyle\sum_{u=1}^M2G_d^2(\theta_u) \geq 0.
\end{equation}
Hence, the beam pattern shaping cost is convex in $\alpha$ with fixed $\bm{x}$ and the optimal $\alpha$ can be found at the critical point.
The solution $\alpha^*$ can be readily obtained as 
\begin{equation} \label{eq:alpha_sol}   \alpha^*=\frac{\displaystyle\sum_{u=1}^U\bm{x}^H\textbf{A}_u\bm{x}G_d(\theta_u)}{\displaystyle\sum_{u=1}^U G_d^2(\theta_u)}.  
\end{equation}
We can simplify the beam pattern shaping cost by plugging \eqref{eq:alpha_sol} into \eqref{eq:f_x1_vector} as
\begin{equation}
    g_{bp}(\bm{x})=\displaystyle\sum_{u=1}^U|\bm{x}^H\textbf{B}_u\bm{x}|^2,
\end{equation}
where
\begin{equation}
    \textbf{B}_u \triangleq \left(\frac{G_d(\theta_u)\displaystyle\sum_{u'=1}^U\textbf{A}_{u'}G_d(\theta_{u'})}{\displaystyle\sum_{u'=1}^UG_d^2(\theta_{u'})}-\textbf{A}_u\right).
\end{equation}
\vspace{-4mm}
\section{Space-Time Correlation Function}\label{sec:appendix_B}

The vector-form space-time correlation function can be derived using the basic properties of the trace and vectorization operators as
\begin{align*}
\chi_{u,q,q'}&=|\textbf{a}^H(\theta_q)\textbf{X}\textbf{J}_p\textbf{X}^H\textbf{a}(\theta_{q'})|^2 \\
&=|\text{Tr}\left(\textbf{X}^H\textbf{a}(\theta_{q'})\textbf{a}^H(\theta_q)\textbf{X}\textbf{J}_p\right)|^2 \\
&=|\text{Tr}\left((\textbf{a}(\theta_{q})\textbf{a}^H(\theta_{q'})\textbf{X})^H\textbf{X}\textbf{J}_p\right)|^2 \\
&=|\text{vec}^H\left(\textbf{a}(\theta_{q})\textbf{a}^H(\theta_{q'})\textbf{X}\right)\text{vec}\left(\textbf{X}\textbf{J}_p\right)|^2 \\
&=|\left((\textbf{I}_L \otimes \textbf{a}(\theta_{q})\textbf{a}^H(\theta_{q'}))\bm{x}\right)^H(\textbf{J}_{u}^T \otimes \textbf{I}_{N_T})\bm{x}|^2 \\
&=|\bm{x}^H\underbrace{\left(\textbf{J}_{-u}\otimes \textbf{a}(\theta_{q'})\textbf{a}^H(\theta_{q})\right)}_{\textbf{D}_{u,q,q'}}\bm{x}|^2 \\
&= |\bm{x}^H \textbf{D}_{u,q,q'}\bm{x}|^2.
\end{align*}



\section{Proof of Lemma 3}\label{sec:appendix_C}

By applying Lemma 1 and Lemma 2 to \eqref{eq:obj_quartic}, we have
\begin{equation}\label{ineq:quartic_to_quadratic}
\begin{aligned}
    &\text{vec}^H(\bm{x}\bm{x}^H){\bm{\Psi}}\text{vec}(\bm{x}\bm{x}^H) 
    \\   
    &\leq \text{vec}^H(\bm{x}\bm{x}^H)\text{diag}(\bar{\bm{\Psi}}\textbf{1})\text{vec}(\bm{x}\bm{x}^H) \\  
    &\ +2\Re\{\text{vec}^H(\bm{x}\bm{x}^H)({\bm{\Psi}}-\text{diag}(\bar{\bm{\Psi}}\textbf{1}))\text{vec}(\bm{x}_t\bm{x}_t^H)\}  \\ 
    &\ +\text{vec}^H(\bm{x}_t\bm{x}_t^H)(\text{diag}(\bar{\bm{\Psi}}\textbf{1})-{\bm{\Psi}})\text{vec}(\bm{x}_t\bm{x}_t^H).
\end{aligned}
\end{equation}
    Under the strict constant modulus constraint, the first term on the right-hand side becomes a constant \cite{zhao2016unified}:
\begin{equation}\label{eq:powre_square}
\begin{aligned}
    &\text{vec}^H(\bm{x}\bm{x}^H)\text{diag}(\bar{\bm{\Psi}}\textbf{1})\text{vec}(\bm{x}\bm{x}^H)  \\ \nonumber
    &= \text{vec}^H(\bm{x}\bm{x}^H)\left(\bar{\bm{\Psi}}\textbf{1}\odot\text{vec}(\bm{x}\bm{x}^H)\right) \\ \nonumber
    &=\text{Tr}\left(\bm{x}\bm{x}^H\text{mat}(\bar{\bm{\Psi}}\textbf{1}\odot\text{vec}(\bm{x}\bm{x}^H))\right) \\ \nonumber
    &= \text{Tr}\left(\bm{x}\bm{x}^H\left(\underbrace{\text{mat}(\bar{\bm{\Psi}}\textbf{1})}_{\textbf{E}}\odot\bm{x}\bm{x}^H\right)\right) \\ \nonumber
    &= \text{Tr}\left(\bm{x}\bm{x}^H\left(\textbf{E}\odot\bm{x}\bm{x}^H\right)\right)
     \\ \nonumber
    &\overset{\mathrm{(a)}}{=} \text{Tr}\left(\left(\bm{x}\bm{x}^H\odot\bm{x}^*\bm{x}^T\right)\textbf{E}\right)
    \\ \nonumber
    &= \frac{P_T^2}{N_T^2}\textbf{1}^T\textbf{E}\textbf{1},
\end{aligned}
\end{equation}
where equality $(a)$ follows from $\text{Tr}(\textbf{A}^T(\textbf{B}\odot \textbf{C}))=\text{Tr}((\textbf{A}^T\odot \textbf{B}^T)\textbf{C})$ \cite{magnus2019matrix}. 
By simplifying constant terms, the objective can be rewritten as
\begin{equation}\label{ineq:quadratic_majorizer2}
    \begin{aligned}
    &\omega_{bp}g_{bp}(\bm{x})+\omega_{ac}g_{ac}(\bm{x})+\omega_{cc}g_{cc}(\bm{x})  \\ 
    &\leq 2\Re\{\text{vec}^H(\bm{x}\bm{x}^H)({\bm{\Psi}}-\text{diag}(\bar{\bm{\Psi}}\textbf{1})\text{vec}(\bm{x}_t\bm{x}_t^H)\}
    +const \\
    &= 2\Re\{\text{vec}^H(\bm{x}\bm{x}^H){\bm{\Psi}}\text{vec}(\bm{x}_t\bm{x}_t^H)\} \\
    &- 2\Re\{\text{vec}^H(\bm{x}\bm{x}^H)\text{diag}(\bar{\bm{\Psi}}\textbf{1})\text{vec}(\bm{x}_t\bm{x}_t^H)\}
    +const.
\end{aligned}
\end{equation}
The first term on the right-hand side of \eqref{ineq:quadratic_majorizer2} can be decomposed as
\begin{equation}
\begin{aligned}
    &2\Re\{\text{vec}^H(\bm{x}\bm{x}^H){\bm{\Psi}}\text{vec}(\bm{x}_t\bm{x}_t^H)\} \\ \nonumber
    &= 2\Re\{\text{vec}^H(\bm{x}\bm{x}^H)\left(\omega_{bp}\bm{\Psi}_1+\omega_{ac}\bm{\Psi}_2+\omega_{cc}\bm{\Psi}_3\right)\text{vec}(\bm{x}_t\bm{x}_t^H)\}
\end{aligned}    
\end{equation}
Using the basic properties of the vectorization and trace operators, we have
\begin{align*}
    &\text{vec}(\bm{x}\bm{x}^H)\bm{\Psi}_1\text{vec}^H(\bm{x}_t\bm{x}_t^H) \\
    &= \displaystyle\sum_{u=1}^U\text{vec}(\bm{x}\bm{x}^H)\text{vec}(\textbf{B}_u)\text{vec}^H(\textbf{B}_u)\text{vec}^H(\bm{x}_t\bm{x}_t^H) \\
    &=\displaystyle\sum_{u=1}^U\text{Tr}(\bm{x}\bm{x}^H\textbf{B}_u)\text{Tr}(\textbf{B}^H_u\bm{x}_t\bm{x}_t^H)\\
    &=\displaystyle\sum_{u=1}^U\bm{x}^H\textbf{B}_u\bm{x}\bm{x}_t^H\textbf{B}^H_u\bm{x}_t \\
    &=\bm{x}^H\underbrace{\left( \displaystyle\sum_{u=1}^U\bm{x}_t^H\textbf{B}^H_u\bm{x}_t\textbf{B}_u\right)}_{\bm{\Phi}_1}\bm{x}\\
    &= \bm{x}^H\bm{\Phi}_1\bm{x}.
\end{align*}
Likewise, we can rewrite the second and third terms, respectively, as
\begin{equation}
    \begin{aligned}
        &\text{vec}^H(\bm{x}\bm{x}^H)\bm{\Psi}_2\text{vec}(\bm{x}_t\bm{x}_t^H)\\ &= \bm{x}^H\underbrace{\left(\displaystyle\sum_{q=1}^Q\displaystyle\sum_{\substack{\tau=-P+1, \\ \tau\neq 0}}^{P-1}\bm{x}_t^H\textbf{D}^H_{\tau,q,q}\bm{x}_t\textbf{D}_{\tau,q,q}\right)}_{\bm{\Phi}_2} \bm{x}\\
    &= \bm{x}^H\bm{\Phi}_2\bm{x},\\
    &\text{vec}^H(\bm{x}\bm{x}^H)\bm{\Psi}_3\text{vec}(\bm{x}_t\bm{x}_t^H) \\&= \bm{x}^H\underbrace{\left(\displaystyle\sum_{q=1}^{Q}\displaystyle\sum_{\substack{q'=1, \\ q'\neq q}}^Q\displaystyle\sum_{\tau=-P+1}^{P-1}\bm{x}_t^H\textbf{D}^H_{\tau,q,q'}\bm{x}_t\textbf{D}_{\tau,q,q'}\right)}_{\bm{\Phi}_3} \bm{x}\\
    &= \bm{x}^H\bm{\Phi}_3\bm{x}.
    \end{aligned}
\end{equation}
The second term on the right-hand side of \eqref{ineq:quadratic_majorizer2} can be rewritten as
\begin{equation}
    \begin{aligned}
        &\text{vec}(\bm{x}\bm{x}^H)^H\text{diag}(\bar{\bm{\Psi}}\textbf{1})\text{vec}(\bm{x}_t\bm{x}_t^H)  \\ \nonumber
    &= \text{vec}(\bm{x}\bm{x}^H)^H\left(\bar{\bm{\Psi}}\textbf{1}\odot\text{vec}(\bm{x}_t\bm{x}_t^H)\right) \\ \nonumber
    &=\text{Tr}\left(\bm{x}\bm{x}^H\text{mat}(\bar{\bm{\Psi}}\textbf{1}\odot\text{vec}(\bm{x}_t\bm{x}_t^H))\right) \\ \nonumber
    &= \text{Tr}\left(\bm{x}\bm{x}^H\left(\underbrace{\text{mat}(\bar{\bm{\Psi}}\textbf{1})}_{\textbf{E}}\odot\bm{x}_t\bm{x}_t^H\right)\right) \\ \nonumber
    &= \bm{x}^H\left(\textbf{E}\odot\bm{x}_t\bm{x}_t^H\right)\bm{x}.
    \end{aligned}
\end{equation}
Based on the above discussions, \eqref{ineq:quadratic_majorizer2} can be simplified as 
\begin{equation}\label{ineq:quadratic_majorizer3}
    \begin{aligned}
        &\omega_{bp}g_{bp}(\bm{x})+\omega_{ac}g_{ac}(\bm{x})+\omega_{cc}g_{cc}(\bm{x}) \\
         &\leq \bm{x}^H\underbrace{2\left(\omega_{bp}\bm{\Phi}_1+\omega_{ac}\bm{\Phi}_2+\omega_{cc}\bm{\Phi}_3-(\textbf{E}\odot \bm{x}_t\bm{x}_t^H)\right)}_{\bm{\Phi}}\bm{x} +const\\
         &= \bm{x}^H\bm{\Phi}\bm{x}+const.
    \end{aligned}    
\end{equation}

\bibliographystyle{IEEEtran}
\bibliography{IEEEabrv,references}

\end{document}